\begin{document}

\title{All Subgraphs of a Wheel are 5-Coupled-Choosable%
\thanks{Research of TB supported by NSERC.}}
\author{Sam Barr \and Therese Biedl}
\institute{David R. Cheriton School of Computer Science,
University of Waterloo, Waterloo, Canada. \email{\{s4barr,biedl\}@uwaterloo.ca}}
\date{\today}

\maketitle

\begin{abstract}
    A wheel graph consists of a cycle along with a center vertex
    connected to every vertex in the cycle.
    In this paper we show that every subgraph of a wheel graph
    has list coupled chromatic number at most 5, and this coloring
	can be found in linear time.
    We further show that `5' is tight for every wheel graph
    with at least 5 vertices, and briefly discuss possible generalizations
	to planar graphs of treewidth 3.
\end{abstract}

\section{Introduction}
\label{sec:intro}

In this paper we study the problem of \emph{coupled choosability},
the problem of finding a valid coloring given list assignments to every vertex
and face of a planar graph. The problem is of great relevance to list coloring
1-planar graphs, as list coupled coloring a planar graph corresponds to
list coloring an optimal 1-planar graph.
(Detailed definitions will be given in Section~\ref{sec:definitions}.)
Wang and Lih \cite{coupled-choosability-plane-graphs} 
show that every planar graph
is 7-coupled-choosable, and hence every optimal 1-planar graph is 7-choosable.
They further show
that maximal planar graphs are 6-coupled-choosable,
planar graphs of maximum degree 3 are 6-coupled-choosable, and
outerplanar graphs (and more generally, all $K_4$-minor free graphs)
are 5-coupled-choosable.

The result by Wang and Lih settles the coupled choosability
for planar partial 2-trees (which are the same as $K_4$-minor free graphs).
Initially wishing to investigate the coupled choosability
of planar partial 3-trees, in this paper we investigate the coupled choosability
of wheel graphs and their subgraphs. 
In Theorem~\ref{main_theorem}, we show that any subgraph of a wheel is
5-coupled-choosable, and the coloring can be found in linear time.
(Prior papers such as \cite{coupled-choosability-plane-graphs} did not
address the run-time of finding their colorings; it can clearly be
done in polynomial time by following the steps of their proof but
linear time is not obvious.)
In
Theorem~\ref{characterization_theorem}, we characterize the coupled choosability
of wheel graphs by showing that 5 is tight for wheel graphs with at least
5 vertices. In the last section of the paper, we touch upon how these results
could be relevant in finding the coupled choosability of planar partial 3-trees.

As for related results,
the (non-coupled) choosability of wheel graphs was characterized in
a different paper by Wang and Lih \cite{2005-wang-lih-halin}: wheels of even order
have list chromatic number 4, while wheels of odd order have list chromatic number 3.
This stands in contrast to our result, as the parity of the number of vertices
in the graph does not affect the coupled choosability of wheel graphs.
Wang and Lih also show that Halin graphs that are not wheels have list chromatic number 3,
while in Theorem~\ref{counterexample} we prove the existence of
a Halin graph that is not 5-coupled-choosable (in fact, it is not 5-coupled-colorable).

Our paper is structured as follows:
In Section~\ref{sec:definitions} we will go over the necessary definitions and terminology for graphs
and graph coloring. In Section~\ref{sec:wheel} we investigate the coupled choosability of wheel graphs.
In Section~\ref{sec:operations} we examine how coupled choosability behaves under certain graph operations.
In Section~\ref{sec:subgraph} we extend our analysis of wheel graphs to subgraphs of wheels,
along with lower-bounding the coupled choosability of wheel graphs.
In Section~\ref{sec:tw3} we go over several possible extensions to our results, in particular
some conjectures about the coupled-choosability of planar partial 3-trees.

\section{Definitions}
\label{sec:definitions}

We assume basic familiarity with graph theory (see \cite{diestel}).
In this paper all graphs are finite and connected.

The \emph{complete graph} $K_4$ consists of four vertices and all possible
edges between them.  
A \emph{subdivision} of a graph $G$ is formed by repeatedly
taking some edge $uv \in E(G)$, removing $e$ from $G$, adding a new vertex $x$,
and adding edges $ux$ and $xv$.  A graph is called \emph{$K_4$-minor free} if
none of its subgraphs is a subdivision of $K_4$.  

We recall that a graph $G$ is called \emph{planar} if it can be
drawn in the plane without edges crossing, and \emph{plane} if
a specific planar drawing $\Gamma$ is given.  
The maximal regions of $\mathbb{R}\setminus \Gamma$ are called \emph{faces};
the unbounded region
is known as the \emph{outer face} and all other faces are \emph{inner faces}.
An \emph{outerplanar} graph is a graph that can be drawn in the plane such that
every vertex is on the outer face; such a graph is $K_4$-minor free.
A \emph{bigon} is a face that is bounded 
by two duplicate edges between a pair
of vertices.
For a plane graph $G$, we use $V(G)$, $E(G)$, and $F(G)$ to denote the set of
\emph{vertices}, the set of \emph{edges}, and the set of \emph{faces} of $G$,
respectively.
The \emph{dual graph} $G^*$ of a plane graph is obtained by exchanging
the roles of vertices and faces, i.e., $G^*$ has a vertex for every face
of $G$, and an edge $(f_1,f_2)$ for every common edge of the two corresponding
faces $f_1,f_2$ in $G$.

A \emph{list assignment} is a map $L$ that assigns a set of \emph{colors}
for each vertex or face in $V(G) \cup F(G)$. A
\emph{coupled coloring with respect to $L$}
is a map $c$ such that $c(x) \in L(x)$ for every $x \in V(G) \cup F(G)$,
and $L(x)\neq L(y)$ for incident or adjacent elements $x,y \in V(G) \cup F(G)$.
If such a map $c$ exists, then we say that $G$ is \emph{$L$-coupled-choosable}.
If $G$ is $L$-coupled-choosable for every $L$ such that
$|L(x)| =k$ for every $x \in V(G) \cup F(G)$, then we say that $G$
is \emph{$k$-coupled-choosable}. The smallest integer $k$ such that
such that $G$ is $k$-coupled-choosable is called the \emph{list coupled chromatic number}
of $G$ and denoted $\chi^L_{vf}(G)$. 
Observe that a list coupled coloring of a graph $G$ implies a list coupled coloring
of the dual graph $G^*$, since the roles of the vertices and the faces is exchanged
but incidences/adjacencies stay the same. Hence, we have
$\chi^L_{vf}(G) = \chi^L_{vf}(G^*)$.

A natural way to express the list coupled chromatic number is to
define a new graph $X(G)$ with vertices for all vertices and faces of $G$
and edges whenever the vertices and faces $G$ are adjacent/incident.
This graph $X(G)$ is \emph{1-planar}, i.e., can be drawn in the plane
with at most one crossing per edge.  In fact, if $G$ is 3-connected
then $X(G)$ is an \emph{optimal
1-planar graph}, i.e., it is simple and has the maximum-possible $4n-8$ edges.
(All optimal 1-planar graphs can be obtained in this fashion \cite{Schumacher}.)
A coupled coloring of $G$
corresponds to a \emph{vertex coloring} of $X(G)$, i.e., a coloring
of the vertices such that adjacent vertices have different colors.  When restricting
a vertex coloring to given lists $L$,  then the respective
terms are \emph{$L$-choosable}, \emph{$k$-choosable}, and the
\emph{list chromatic number} $\chi^L(X)$.

The \emph{wheel graph} $W_n$ is formed by starting with a cycle $C_{n-1}$
on $n-1$ vertices (the \emph{outer cycle}),
adding a \emph{center vertex} inside the cycle
and adding a \emph{spoke-edge} from the center vertex to every vertex on the cycle.
We will label the center vertex and the outer face of the wheel graph as $x_0$
and $f_0$, respectively. We further label the vertices in the outer cycle
as $x_1, \ldots, x_{n-1}$, and label the inner faces as $f_1, \ldots, f_{n-1}$
such that $x_i$ is incident to $f_i$ and $f_{i+1}$ for $1 \leq i < n-1$,
and $x_{n-1}$ is adjacent to $f_{n-1}$ and $f_1$ (see Figure~\ref{graph:9}).

\section{Coupled Choosability of Wheel Graphs}
\label{sec:wheel}

In order to prove the desired result for all subgraphs of the wheel graph,
we first determine the coupled choosability of the wheel graph itself.
It will be helpful to recall the following result relating
the choosability of a graph to the maximum degree; it is an analogue
to Brook's theorem and similarly upper-bounds the chromatic number
of a graph by its maximum degree.

\begin{lemma}
    \label{erdos}
    (Erd\H{o}s, Rubin, and Taylor \cite{erdos-choosability-in-graphs})
    Let $G$ be a connected graph that is neither an odd cycle nor a complete
    graph. Then $G$ is $\Delta(G)$-choosable.
\end{lemma}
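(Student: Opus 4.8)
The plan is to prove this list-coloring analogue of Brooks' theorem by distinguishing whether or not $G$ is $\Delta(G)$-regular; the non-regular case is a short greedy argument, while the regular case carries the real difficulty. Write $\Delta=\Delta(G)$ and fix a list assignment $L$ with $|L(v)|=\Delta$ for all $v$, which is the only case we need since enlarging lists only makes coloring easier.

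For the non-regular case, suppose some vertex $v_0$ has $\deg(v_0)<\Delta$. I would root a spanning tree of $G$ at $v_0$ and process the vertices in order of non-increasing distance from $v_0$, so that $v_0$ is colored last and every other vertex is colored before its parent. Coloring greedily, each non-root vertex $u$ has its still-uncolored parent among its neighbors, so at most $\deg(u)-1\le\Delta-1$ neighbors are already colored and its list of size $\Delta$ leaves a free color; and $v_0$ has at most $\deg(v_0)\le\Delta-1$ neighbors. Hence $G$ is $L$-colorable.

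This reduces everything to $G$ being $\Delta$-regular, connected, and neither complete nor an odd cycle. I would first pass to the $2$-connected case via the block decomposition: a leaf block of $G$ meets the rest only at a cut vertex, which therefore has smaller degree inside that block, so the greedy idea applies there and lets one reduce to a single $2$-connected block. If $\Delta=2$ this block is an even cycle (odd cycles and $K_3$ being excluded), which is $2$-choosable by the elementary argument of coloring around the cycle; this is the base case. For $\Delta\ge 3$ I would invoke the structural fact, as in Lov\'asz's proof of the ordinary Brooks theorem, that a $2$-connected non-complete graph contains an induced path $v_1v_2v_3$ with $v_1v_3\notin E(G)$ and $G-\{v_1,v_3\}$ connected.

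The main step is to precolor $v_1,v_3$ and extend. Picking $a\in L(v_1)$, $b\in L(v_3)$, deleting these colors from the lists of the other neighbors, and setting $H=G-\{v_1,v_3\}$, the graph $H$ stays connected and every vertex retains a list at least as large as its degree in $H$. The gain sits at $v_2$: if we may take $a=b$, a color common to $L(v_1)$ and $L(v_3)$, then only one color is removed at $v_2$, so $v_2$ has more colors than its $H$-degree, and the greedy argument (now rooted at $v_2$) finishes $H$ and hence $G$. I expect the genuine obstacle to be exactly the disjoint case $L(v_1)\cap L(v_3)=\emptyset$, where no common color exists, $v_2$ gains no slack, and this simple extension fails. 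Handling it is the heart of the theorem: one must find the missing slack elsewhere---for instance along an even cycle through $v_2$ permitting a two-color interchange---or instead run the argument as an induction on $|V(G)|$ proving the sharper statement that every non-colorable instance has all blocks complete or odd cycles with coordinated lists, a configuration ruled out here because $G$ is $2$-connected and neither complete nor an odd cycle. I would budget most of the effort for this disjoint-list analysis.
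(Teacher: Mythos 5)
First, a remark on the comparison itself: the paper does not prove this lemma at all---it is imported verbatim from Erd\H{o}s, Rubin and Taylor---so there is no internal proof to measure your attempt against. Judged on its own terms, your proposal sets up the standard reductions correctly (the spanning-tree greedy ordering handles the non-regular case, and the Lov\'asz configuration $v_1v_2v_3$ with $v_1v_3\notin E(G)$ and $G-\{v_1,v_3\}$ connected is the right tool for the $2$-connected regular case), but it stops exactly where the proof becomes nontrivial: you explicitly leave the case $L(v_1)\cap L(v_3)=\emptyset$ unresolved, offering only candidate strategies (Kempe-style interchanges, or an induction towards the Gallai-tree characterization) without carrying either out. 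Since you yourself call this case ``the heart of the theorem,'' what you have written is a plan rather than a proof.

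The gap is real but closable by a short pigeonhole argument---incidentally the same device this paper uses in the proof of Lemma~\ref{main_lemma} for the pair $f_{n-1},x_1$ against $L(f_1)$. If $L(v_1)\cap L(v_3)=\emptyset$, then $|L(v_1)\cup L(v_3)|=2\Delta>\Delta=|L(v_2)|$, so some colour $a$, say $a\in L(v_1)$, lies outside $L(v_2)$. Colour $v_1$ alone with $a$ and delete $a$ from the lists of its neighbours. The graph $G-v_1$ is still connected by $2$-connectedness, every remaining vertex keeps a list at least as large as its degree in $G-v_1$, and $v_2$ now has a strict surplus because it lost a neighbour but no colour; your rooted greedy argument then finishes. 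So the disjoint case is in fact the easy branch, and no interchange or structural induction is needed. Two smaller cautions: your block-decomposition reduction is stated too loosely, since after precolouring the cut vertex a neighbour of it inside a leaf block can have list size exactly equal to its degree there, so ``the greedy idea applies'' requires the same surplus bookkeeping rather than being immediate; and in the common-colour branch you should note explicitly that a vertex adjacent to both $v_1$ and $v_3$ loses two from its degree but only one colour from its list, which is what keeps all lists at least as large as the degrees in $G-\{v_1,v_3\}$.
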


Our main result in this section is:

\begin{lemma}
    \label{main_lemma}
    Every wheel graph $W_n$, $n \geq 4$, is 5-coupled-choosable.
\end{lemma}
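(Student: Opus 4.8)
The plan is to work in the graph $X(G)$ whose vertices are the $2n$ elements of $V(W_n)\cup F(W_n)$ and whose edges record incidence/adjacency, so that a coupled coloring of $W_n$ is exactly a vertex coloring of $X(G)$. First I would record the structure of $X(G)$. The centre $x_0$ is adjacent to every rim vertex $x_1,\dots,x_{n-1}$ and incident to every inner face $f_1,\dots,f_{n-1}$, and symmetrically the outer face $f_0$ is incident to every rim vertex and adjacent to every inner face; moreover $x_0$ and $f_0$ are neither adjacent nor incident. Thus $x_0$ and $f_0$ are two non-adjacent \emph{apices}, each joined in $X(G)$ to all $m:=2(n-1)$ remaining elements. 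Deleting the two apices, the rim elements induce exactly the square of an even cycle: reading them cyclically as $f_1,x_1,f_2,x_2,\dots,f_{n-1},x_{n-1}$, consecutive elements are incident and elements two apart (two rim vertices, or two inner faces) are adjacent, so the rim induces $C_m^2$ with $m\ge 6$. Since $C_m^2$ is connected and $4$-regular, is not a complete graph (as $m\ne 5$) and is not an odd cycle (as $m$ is even), Lemma~\ref{erdos} shows that $C_m^2$ is $4$-choosable, hence colorable from any lists of size at least $4$.

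This suggests the following strategy: color $x_0$ and $f_0$ first, in such a way that every rim element still has at least $4$ available colors, and then finish with Lemma~\ref{erdos}. A rim element $r$ loses only the apex colors lying in $L(r)$, so it keeps at least $4$ colors precisely when $L(r)$ does not contain both chosen apex colors. If $L(x_0)\cap L(f_0)\neq\emptyset$ I would give $x_0$ and $f_0$ a common color $\gamma$ (legal, since they are non-adjacent); then each rim element loses at most the single color $\gamma$, so all rim lists have size at least $4$ and Lemma~\ref{erdos} completes the coloring. If instead $L(x_0)$ and $L(f_0)$ are disjoint but there still exist $c_0\in L(x_0)$ and $d_0\in L(f_0)$ such that no rim list contains both, I would color $x_0:=c_0$, $f_0:=d_0$; again every rim list keeps at least $4$ colors and Lemma~\ref{erdos} finishes.

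The hard case is when $L(x_0)$ and $L(f_0)$ are disjoint and every pair $(c,d)\in L(x_0)\times L(f_0)$ is contained in some rim list, so that no choice of apex colors avoids creating \emph{doubly hit} rim elements with only $3$ colors left. Here I would color one apex, say $x_0:=c_0$, reserve a color $d_0$ for $f_0$, and try to color the rim from the lists $L(r)\setminus\{c_0,d_0\}$ (coloring $f_0:=d_0$ last, which is legal once no rim element uses $d_0$). The obstacle is exactly the doubly-hit rim elements, whose lists now have size $3<4$, so Lemma~\ref{erdos} no longer applies directly. To get around this I would exploit that deleting a single vertex from $C_m^2$ leaves the path-power $P_{m-1}^2$, which is $2$-degenerate; ordering the rim so that every doubly-hit element is colored while it still has at most $2$ colored neighbors lets a size-$3$ list suffice for it, and one then only needs a suitable rim element (ideally one whose list avoids both $c_0$ and $d_0$, hence still of size $5$) to close the cyclic order. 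I expect the main work of the proof to lie here: choosing the reserved pair $(c_0,d_0)$ so that the doubly-hit elements are few and well separated along the rim, and guaranteeing a closing vertex with enough colors, so that this degeneracy ordering goes through.
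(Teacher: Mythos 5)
Your setup and your two easy cases coincide exactly with the paper's: the rim of $X(W_n)$ is the square of an even cycle, and when $x_0$ and $f_0$ can be given a common color (or, slightly more generally, a pair of colors not jointly contained in any rim list), every rim list retains $4$ colors and Lemma~\ref{erdos} finishes. The problem is that the hard case --- $L(x_0)\cap L(f_0)=\emptyset$ with every pair of $L(x_0)\times L(f_0)$ hitting some rim list --- is precisely where the substance of the lemma lies, and you leave it as an intention rather than an argument. Two specific things are missing. First, the selection of the pair $(c_0,d_0)$: the paper double-counts, noting that a rim list of size $5$ contains at most $\lfloor 5/2\rfloor\cdot\lceil 5/2\rceil=6$ of the $25$ pairs, so the $2n-2$ rim lists contain at most $12n-12$ pair-occurrences in total and some pair lies in fewer than $(12n-12)/25<(n-1)/2$ rim lists; hence fewer than a quarter of the rim elements are doubly hit, which guarantees four \emph{consecutive} rim elements that each keep $4$ colors. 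Without some such quantitative bound, ``few and well separated'' is not something you get to assume.

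Second, your proposed finish does not go through as stated. Deleting a vertex from $C_m^2$ does not leave $P_{m-1}^2$: the two cycle-neighbors of the deleted vertex are at cyclic distance $2$ through it, so they remain adjacent, and the closing vertex of any degeneracy ordering ends up with four already-colored neighbors. Your escape hatch --- a rim element whose list avoids both $c_0$ and $d_0$ --- need not exist (every rim list could contain $c_0$, say). The paper instead uses the run of four consecutive $4$-vertices to open the cycle at a controlled seam: it colors $f_{n-1}$ and $x_1$ either with a common color or with colors $p,q$ of which at most one lies in $L(f_1)$ (possible since $|L'(f_{n-1})\cup L'(x_1)|\ge 7>|L(f_1)|$), then colors $x_{n-1}$, so that $f_1$ is guaranteed to retain two colors against its single remaining uncolored neighbor; the rest is an outerplanar path-power handled by the known fact that outerplanar graphs are $3$-choosable even with two consecutive outer-face vertices precolored. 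Some argument of this kind is needed to close the cycle, and it is absent from your proposal.
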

\begin{proof}
    For $n=4$, $W_4$ is the complete graph $K_4$.
    Wang and Lih \cite{coupled-choosability-plane-graphs}
    proved that $\chi^L_{vf}(K_4)=4$, so we assume $n \geq 5$. 
    Let $L$ be a color assignment for $W_n$ such that
    $|L(y)|=5$ for every $y \in V(W_n) \cup F(W_n)$.
    Our goal is to find a coupled coloring with respect to $L$.
    Since $x_0$ and $f_0$ are both adjacent to all remaining
	vertices, we will color them first and then color the rest of $X(W_n)$.
	We will use $X_n$ as a shortcut for $X(W_n)\setminus \{x_0,f_0\}$.
    Observe that $|V(X_n)|=2n-2$
    and that $X_n$ is 4-regular (see Figure~\ref{graph:9}).
    Furthermore, it suffices to find a vertex-colouring of $X_n$ with
    respect to $L$, plus two suitable colors in $L(x_0)$ and
    $L(f_0)$ for $x_0$ and $f_0$.  We have two cases:

    \begin{figure}[ht]
        \hspace*{\fill}
        \includegraphics[scale=0.45,page=2]{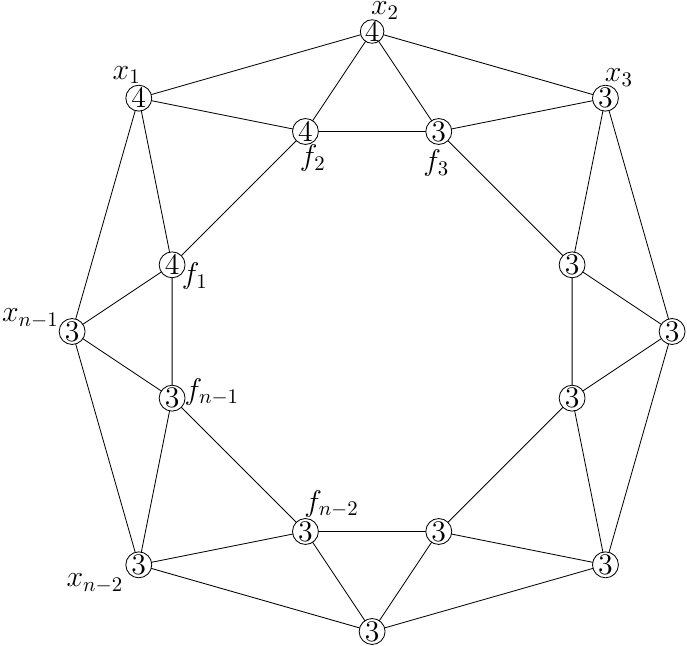}
        \hspace*{\fill}
        \includegraphics[scale=0.45,page=1]{x_graph.pdf}
        \hspace*{\fill}
        \caption{The graph $W_9$ (left) and $X_9$ (right).  Circled numbers indicate a
        lower bound on the list-length in $L'$.}
        \label{graph:9}
    \end{figure}

    \medskip\noindent\textbf{Case 1:} $L(x_0) \cap L(f_0) \neq \emptyset$.
    Let $a \in L(x_0) \cap L(f_0)$, and assign color $a$ to $x_0$ and $f_0$.
    Observe that $|L(y) \setminus\{a\}| \geq 4$ for every $y \in V(X_n)$
    and $X_n$ has maximum degree 4. Moreover $|X_n|=2n-2$ is even,
    so $X_n$ is not an odd cycle. Also $x_1$ and $x_3$ are not adjacent by
	$n\geq 5$, so $X_n$
    is not a complete graph.
    Therefore, by Lemma~\ref{erdos}, we have a list coloring of the
    vertices of $X_n$ that only uses colors in $L \setminus \{a\}$, which
    in turn implies an $L$-list-coloring of the vertices and faces of $W_n$.

    \medskip\noindent\textbf{Case 2:} $L(x_0) \cap L(f_0) = \emptyset$.
    We find suitable colors for $x_0$ and $f_0$ by imitating the method used for $K_4$ 
    in \cite{coupled-choosability-plane-graphs} (but adapted here to 5 colors).
    Define \emph{color-pairs} $S := \setb{ \{a,b\} }{a \in L(x_0), b \in L(f_0)}$.
    By case-assumption $|S| = 25$. 

    We claim that $|\setb{s \in S}{s \subseteq L(y)}| \leq 6$
    for any $y \in V(X_n)$. 
    To see this, let $y \in V(X_n)$, and consider the disjoint sub-lists
    $L_1 := L(y) \cap L(x_0)$ and $L_2 := L(y) \cap L(f_0)$.
    Since $|L_1|+|L_2|\leq|L(y)|=5$, and $|L_1|$ and $|L_2|$ are integers,
    we have
    \[
        |\setb{s \in S}{s \subseteq L(y)}|
        = |L_1 \times L_2|
        = |L_1| \cdot |L_2|
        \leq 6.
    \]

    Therefore, color-pairs of $S$
    appear as subsets of lists in $X_n$ at most
    \[
        \sum_{y \in X_n} |\setb{s \in S}{s \subseteq L(y)}|
        \leq (2n-2) \cdot 6 = 12n - 12
    \]
    times. By $|S|=25$, some element $\{a',b'\}$ of $S$ appears at most
    \[
        \frac{12n-12}{25} < \frac{n-1}2
    \]
    times as a subset of a list in $X_n$.
    Color $x_0$ with $a'$ and $f_0$ with $b'$.
    For $y \in V(X_n)$, define $L'(y) := L(y) \setminus \{a', b'\}$.
    For any $y \in V(X_n)$, we have $3 \leq |L'(y)| \leq 5$.
    We call $y$ a \emph{3-vertex} if $|L'(y)|=3$ (this implies $\{a',b'\}\subset L(y)$),
    and a \emph{4-vertex} otherwise. 
%
    From our choice of colors $a'$ and $b'$, we have
    \[
        \frac{|\setb{y \in V(X_n)}{\text{$y$ is a 3-vertex}}|}{|V(X_n)|}
        < \frac{(n-1)/2}{2n-2}
        = \frac14
    \]
    Therefore, more than three quarters of the vertices of $X_n$ are 4-vertices.
    Consider the cyclic enumeration
    \[ \sigma := \angles{f_1, x_1, f_2, x_2, \ldots, f_{n-1}, x_{n-1}} \]
    of the vertices of $X_n$. Since strictly more than $\frac34|V(X_n)|$ of the
    vertices are 4-vertices, we have four consecutive 4-vertices
    in $\sigma$. Up to exchange of $f_i$ and $x_i$ and renumbering,
    we may assume that $f_1,x_1,f_2$, and $x_2$ are 4-vertices.
	Figure~\ref{graph:9}(right) illustrates the lower bounds on
	the size of $L'$.

	We next color $f_{n-1},x_{n-1}$ and $x_1$ and have two sub-cases.
    If $L'(f_{n-1})\cap L'(x_1) \neq\emptyset$, then color $f_{n-1}$
    and $x_1$ with the same color. 
    Otherwise, since $|L'(f_{n-1}) \cup L'(x_1)| \geq 7 > |L(f_1)|$,
    there are colors $p$ and $q$ for $f_{n-1}$ and $x_1$
    respectively such that at least one of them is not in $L(f_1)$, i.e., $|L(f_1) \cap \{p,q\}| \leq 1$.
    Pick these colors for $f_{n-1}$ and $x_1$. 
    In either case, two vertices adjacent to $x_{n-1}$ have been colored,
    and $|L'(x_{n-1})|\geq 3$, so $x_{n-1}$ will have at least one valid color left,
    and we pick this color for $x_{n-1}$.

    We now have colors $p,q$, and $r$ for $f_{n-1},x_1,$ and $x_{n-1}$
    (respectively) such that $|L'(f_1) \cap \{p,q,r\}| \leq 2$.
    Removing these colors from the lists of their neighbors produces
    new lists $L''$ such that

\noindent\begin{minipage}{0.55\linewidth}
    \begin{align*}
        |L''(f_1)| & = |L'(f_1) \setminus \{p,q,r\}| \geq 4-2 = 2 \\
        |L''(f_2)| & = |L'(f_2) \setminus \{q\}| \geq 4-1 = 3 \\
        |L''(x_2)| & = |L'(x_2) \setminus \{q\}| \geq 4-1 = 3 \\
        |L''(x_{n-2})| & = |L'(x_{n-2}) \setminus\{p,r\}| \geq 3-2 = 1 \\
        |L''(f_{n-2})| & = |L'(f_{n-2}) \setminus \{p\}| \geq 3-1 = 2 \\
        |L''(x_i)| & \geq 3 \quad (\text{for all } 3 \leq i \leq n-3) \\
        |L''(f_i)| & \geq 3 \quad (\text{for all } 3 \leq i \leq n-3)
    \end{align*}
The figure on the right illustrates these lower bounds on the list-lengths in $L''$.
\end{minipage}
\qquad
\begin{minipage}{0.4\linewidth}
        \includegraphics[scale=0.45,page=3]{x_graph.pdf}
\end{minipage}
\medskip


    Let $X_n' := X_n \setminus \{f_{n-1}, x_{n-1}, f_1, x_1\}$ ($X_n'$ is solid in the
	above figure) and color
    it with respect to list assignment $L''$.
    This is feasible since $X_n'$ is outerplanar 
    and outerplanar graphs are 3-choosable even
    if the colors of two consecutive vertices on the outer face
    are fixed \cite{outer} (here we fix the colors for $x_{n-2}$ and $f_{n-2}$). 
    This colors all vertices except for $f_1$, but $|L''(f_1)| \geq 2$
    and $f_1$ has only one neighbor in $X_n'$, so we can give
    it a color not used by $f_2$.
    Therefore, we have a list vertex-coloring of $X_n$ that is
	compatible with the colors for $x_0,f_0$ chosen earlier
	and so implies a list coupled coloring of $W_n$.
\end{proof}

Note that this coloring can easily be found in linear time.    This is
obvious in Case~1 due to the linear-time result for Lemma~\ref{erdos}
\cite{erdoslinear}.   Determining the colors $a,b$ for Case~2 takes 
linear time since all list-lengths are constant, and then we mostly
appeal to list-coloring an outer-planar graph, which can be done in
linear time since outer-planar graphs are 2-degenerate.

\section{Coupled Choosability Under Graph Operations}
\label{sec:operations}

In Section~\ref{sec:subgraph} we seek to prove that all subgraphs of a wheel
are 5-coupled-choosable. In pursuit of this, we examine how various graph operations
affect the list-coupled-chromatic number. First, in contrast to
list-vertex-coloring, there is no clear relationship
between the list coupled chromatic number of a graph and the list coupled chromatic
number of its subgraphs. Indeed, a subgraph may have larger
list coupled chromatic number.

\begin{observation}
    \label{sub:observation}
    There exists a plane graph $G$ with subgraph $H \subseteq G$
    such that $\chi^L_{vf}(H) > \chi^L_{vf}(G)$
\end{observation}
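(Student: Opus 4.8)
The plan is to construct an explicit pair $H \subseteq G$ witnessing the strict inequality. Since the coupled chromatic number is governed by worst-case list assignments, the key idea is that adding edges to a graph can actually \emph{help}: extra adjacencies merge faces or force the dual structure to be ``nicer'' (e.g.\ eligible for Lemma~\ref{erdos} or for an outerplanarity argument), whereas a sparser subgraph can retain large, poorly-interacting faces that defeat every candidate coloring. So I would look for a small graph $H$ whose list coupled chromatic number is provably large — at least because it contains some high-demand local configuration — and then add edges to obtain a supergraph $G$ that becomes $5$-coupled-choosable.

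First I would hunt for the smallest natural candidate for $H$. The excerpt already notes that a connected graph is $3$-coupled-choosable iff it is a tree: any edge incident to a cycle needs $4$ colors on its two endpoints and two incident faces. This suggests examining a cycle $C_m$ as a subgraph of a graph $G$ where the cycle has been triangulated or equipped with a center vertex. Concretely, I would try $H = C_m$ (a plane cycle with exactly two faces, inner and outer) and $G = W_{m+1}$, the wheel obtained by adding a center. By Lemma~\ref{main_lemma}, $\chi^L_{vf}(G) \le 5$. The hope is that $H$, being just a cycle with only two faces but long adjacency chains among vertices and their shared faces, forces $\chi^L_{vf}(H) \ge 6$ under a suitably adversarial list assignment — intuitively because both faces are adjacent to \emph{every} vertex and to each other, creating a dense conflict graph $X(H)$ on the two faces plus $m$ vertices.

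Next I would verify the lower bound $\chi^L_{vf}(H) > 5$ by exhibiting a concrete bad list assignment. The structure to exploit is that in a cycle both faces are incident to all $m$ vertices and are mutually adjacent, so in $X(H)$ the two face-vertices are adjacent to each other and to all $m$ cycle-vertices; the cycle-vertices themselves form a cycle $C_m$. I would assign lists so that the two faces and each vertex are locally starved: for instance give all of $V(H) \cup F(H)$ lists drawn from a small shared palette so that, after any choice for the two faces, the remaining per-vertex lists collapse below what the cycle needs. The main obstacle is precisely this step — pinning down an assignment with lists of a fixed size $k=5$ that provably admits \emph{no} coupled coloring. This requires a careful counting or parity argument on the cycle (reminiscent of why $C_m$ with fixed endpoints can fail to be list-colorable), and I expect the delicate part to be ensuring the two face-colors cannot be chosen to leave every vertex a free color around the whole cycle simultaneously.

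If the pure cycle does not yield a gap, the fallback is to enlarge $H$ slightly — e.g.\ take $H$ to be a cycle with a few chords or a theta-like graph with several faces, all sharing many incidences — and $G$ its wheel completion, so that Lemma~\ref{main_lemma} still caps $\chi^L_{vf}(G) \le 5$ while the extra faces in $H$ raise its demand. In all versions the proof schema is the same: (i) cite Lemma~\ref{main_lemma} (or the $K_4$ base case) to bound $\chi^L_{vf}(G) \le 5$ from above; (ii) construct an explicit size-$k$ list assignment on $H$ and argue by exhaustion or a short combinatorial contradiction that no coupled coloring exists, giving $\chi^L_{vf}(H) \ge k+1 > \chi^L_{vf}(G)$. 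I would present the witness graphs with a small figure and keep the infeasibility argument self-contained, since that counting step is the crux of the whole observation.
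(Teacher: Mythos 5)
Your plan has a fatal structural flaw: by fixing $G$ to be a wheel you anchor the upper bound at $\chi^L_{vf}(G)\le 5$, which forces you to find a subgraph $H$ of a wheel with $\chi^L_{vf}(H)\ge 6$ --- and Theorem~\ref{main_theorem} of this very paper says no such $H$ exists, since \emph{every} subgraph of a wheel is 5-coupled-choosable. Your primary candidate $H=C_m$ fails concretely: Case~3 of the proof of Theorem~\ref{main_theorem} already shows a cycle is 5-coupled-choosable (color its two faces arbitrarily, every vertex retains at least three colors, and cycles are 3-choosable), so no adversarial size-5 list assignment on $C_m$ can exist, and the ``careful counting or parity argument'' you defer to is looking for something that is not there. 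Your fallback (chords, theta graphs) runs into the same wall for the same reason. The lower-bound step, which you correctly identify as the crux, is never actually carried out.

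The fix is to witness the gap at a lower threshold, which is what the paper does. Take $G=K_4$, which satisfies $\chi^L_{vf}(K_4)=4$ by \cite{coupled-choosability-plane-graphs}, and let $H$ be $K_4$ minus one edge. Deleting that edge merges two triangular faces into a single quadrilateral face $f'$, and then the two endpoints of the deleted edge's ``surviving'' structure --- concretely $x_0,x_1,x_2$, the remaining triangular face $f_2$, and $f'$ --- are pairwise adjacent or incident, so $X(H)$ contains a $K_5$ and $\chi^L_{vf}(H)\ge 5>4=\chi^L_{vf}(G)$. The useful intuition you had (deleting edges merges faces and can create denser incidence structure) is exactly right; you just need to aim it at producing a clique in $X(H)$ that exceeds $\chi^L_{vf}(G)$, rather than trying to push a subgraph of a wheel past 5.
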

\begin{proof}
Let $H$ be the graph obtained by deleting one edge of $K_4$;
    see Figure~\ref{subgraph}.
    From Theorem~10 of \cite{coupled-choosability-plane-graphs},
    we know that the graph $K_4$ is 4-coupled-choosable,
    i.e., $\chi^L_{vf}(K_4)=4$.
    But in graph $H$, 
	the incidences and adjacencies
    between $x_0,x_1,x_2,f_2$, and $f'$ form a $K_5$, and therefore
    $\chi^L_{vf}(H) \geq 5 > 4 = \chi^L_{vf}(K_4)$. 
\end{proof}

\begin{figure}[ht]
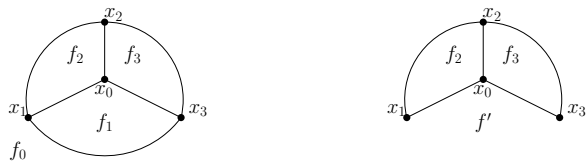

    \hspace*{\fill}
    \includegraphics[scale=0.45,page=7]{x_graph.pdf}
    \hspace*{\fill}
    \includegraphics[scale=0.45,page=8]{x_graph.pdf}
    \hspace*{\fill}
    \caption{$K_4$ and subgraph $H$.
Observe that $\chi^L_{vf}(H)=5$ since it is outerplanar.}
    \label{subgraph}
\end{figure}

Other graph operations are better behaved in this respect.
For instance, there is a clear relationship between the coupled choosability
of some graph $G$ and the coupled choosability of any subdivision of $G$.

\begin{lemma}
    \label{subdivision_lemma}
    For any plane graph $G$, 
    any subdivision $H$ of $G$ is $\max\{5, \chi^L_{vf}(G)\}$-coupled-choosable.
\end{lemma}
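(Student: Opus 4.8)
The plan is to exploit the fact that a subdivision leaves the face structure of $G$ essentially unchanged, so that I can first color the ``skeleton'' $V(G)\cup F(G)$ using the hypothesis $\chi^L_{vf}(G)\le k$ and then fill in the subdivision vertices greedily. Set $k=\max\{5,\chi^L_{vf}(G)\}$ and let $L$ be a list assignment for $H$ with $|L(y)|=k$ for every $y\in V(H)\cup F(H)$; the goal is to produce a coupled coloring of $H$ with respect to $L$.

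First I would observe that inserting a vertex in the interior of an edge does not alter the regions of the drawing, so the faces of $H$ are in natural bijection with the faces of $G$, and every original vertex lies on exactly the same face boundaries as before. Consequently, among the elements of $V(G)\cup F(G)$, the incidences and adjacencies that survive in $H$ form a subset of those present in $G$ (the only thing lost is that the two endpoints $u,v$ of a subdivided edge are no longer adjacent). Restricting $L$ to $V(G)\cup F(G)$ therefore gives a list assignment for $G$ with all lists of size $k\ge\chi^L_{vf}(G)$, so by definition there is a coupled coloring $c_0$ of $G$; transferring $c_0$ to $H$ yields a valid partial coupled coloring of all original vertices and all faces of $H$.

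It remains to color the subdivision vertices. The key structural point is that each subdivision vertex has degree exactly $2$ and is incident to at most two faces, namely the (at most two) faces bordering the edge it subdivides, so in the incidence graph $X(H)$ it has at most four neighbors. Since the subdivision vertices induce a disjoint union of paths (one path per subdivided edge), I would color each such path $x_1,\dots,x_m$ in order, starting from the endpoint adjacent to $u$. When $x_i$ is colored, its already-colored neighbors are its (at most two) incident faces, the previous path vertex $x_{i-1}$ (which is $u$ when $i=1$), and, only in the case $i=m$, the vertex $v$; this is at most four forbidden colors, and since $|L(x_i)|=k\ge5$ a valid color always remains. Coloring every path in this way extends $c_0$ to a coupled coloring of all of $H$.

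I expect the only real care to be in the second paragraph: one must verify precisely that transporting the coloring of $G$ to $H$ respects exactly the incidences that survive in $H$, and that the face bijection matches the lists correctly. The greedy completion is then routine, the whole point being that a degree-two vertex incident to two faces can never see more than four constrained elements, which is exactly what forces the bound to be $\max\{5,\chi^L_{vf}(G)\}$ rather than $\chi^L_{vf}(G)$ itself.
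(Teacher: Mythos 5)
Your proposal is correct and rests on exactly the same key observation as the paper's proof: a subdivision vertex has degree two and at most two incident faces, hence at most four constraints, so a list of size at least $5$ always leaves a free color. The only difference is organizational --- the paper inducts on the number of subdivisions and colors the newest vertex last, while you color all of $G$ first and then fill in each path of subdivision vertices greedily --- which is not a genuinely different argument.
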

\begin{proof}
    Let $L$ be a list assignment for $H$ such that $|L(x)|=\max\{5,\chi^L_{vf}(G)\}$
    for every vertex and face of $H$. We prove the statement
    by induction on the number of subdivisions performed on $G$
    to obtain $H$. If $H$ is the result of subdividing the edges of $G$ zero times,
    then $H=G$ and so trivially any $L$-coupled-coloring of $G$ is an
    $L$-coupled-coloring of $H$.

    Otherwise, $H$ was the result of performing $k+1$ subdivisions on $G$ for
	some $k\geq 0$.
    In particular, $H$ is the result of subdividing a single edge of some
    graph $H'$, where $H'$ was the result of performing $k$ subdivisions on $G$.
    Let $uv \in E(H')$ be the edge of $H'$ that was subdivided,
    and let $x$ be the vertex which was added.
    By the inductive hypothesis, $H'$ is $\max\{5,\chi^L_{vf}(G)\}$-coupled-choosable.
    Color the faces of $H$ and the vertices $V(H) \setminus \{x\}$ according
    to how they would be colored in $H'$. Then we only need to color
    the remaining vertex $x$. Note that $x$ has degree two with neighbors
    $u$ and $v$. Let $f_1$ and $f_2$ be the two faces adjacent
    to the edge $uv$ in $H'$. Then $u,v,f_1$, and $f_2$ are the only
    vertices and faces that are adjacent (respectively incident) to $x$.
    Hence, after coloring the vertices and faces from $H'$, $x$ still has at least
    $|L(x)| - 4 \geq 5 - 4 = 1$ 
    color left and can be colored.
\end{proof}

This implies another result. 
For a planar graph $G$, subdividing an edge corresponds in the dual graph $G^*$
to duplicating edges to form bigons. Since $\chi^L_{vf}(G)=\chi^L_{vf}(G^*)$ we
therefore have:

\begin{corollary}
\label{cor:bigon}
    Let $G$ be a plane graph, and $H$ the result of duplicating
    some edges of $G$ to form bigons. 
    Then $H$ is $\max\{5, \chi^L_{vf}(G)\}$-coupled-choosable.
\end{corollary}

A similar result can also be had for adding a vertex of degree one to a graph.

\begin{lemma}
    \label{lemma:degree1}
    Let $G$ be a planar graph, and let $H$ be $G$ plus a new vertex of degree one.
    Then $H$ is $\max\{3, \chi^L_{vf}(G)\}$-coupled-choosable.
\end{lemma}
\begin{proof}
    Let $x$ be the new vertex, and let
    $L$ be a list assignment for $H$ such that $|L(y)| = \max\{3,\chi^L_{vf}(G)\}$
    for every vertex and face of $H$. Color the faces and vertices of
    $H-x$ according to how they would be colored in $G$. It remains to color $x$.
    Since $x$ is adjacent to only one vertex and incident to only one face in $H$,
    after coloring the vertices and face of $H-x$, $x$ still has at least
    $|L(x)| - 2 \geq 3 - 2 = 1$ 
    color left and can be colored.
\end{proof}

Note that for all three of the above lemmas, the coloring of $H$ can be found in
constant time, given a suitable coloring of $G$. 

Wang and Lih~\cite{coupled-choosability-plane-graphs} proved that all $K_4$-minor free
graphs are 5-coupled-choosable, but it is not clear whether their proof leads to a
linear-time algorithm to find the coloring.  With the above two results, such an
algorithm is immediate.

\begin{theorem}
    \label{k4:new:proof}
    All $K_4$-minor free graphs are 5-coupled-choosable, and the coloring can
	be found in linear time.
\end{theorem}
\begin{proof}
    It is known (see \cite{k4-minor-result}) that every $K_4$-minor free
    graph $G$ can be obtained from some tree $T$ via a series of duplicating edges,
    subdividing edges, and adding vertices of degree one. Then by
    Lemmas \ref{subdivision_lemma} and \ref{lemma:degree1}, Corollary \ref{cor:bigon},
    and the 3-coupled-choosability of trees, we have that $G$ is 5-coupled-choosable.

	To find the coloring efficiently, first split $G$ into its 2-connected
	components $C_1,\dots,C_d$ \cite{HT73}. Then run on each component $C_i$
	the algorithm 
	that recognizes so-called series-parallel graphs in linear  time \cite{VTL82}.
	Since 2-connected $K_4$-minor free graphs are series-parallel graphs,
	this algorithm will succeed on each $C_i$, and 
	following the trace of its execution one obtains how to construct $C_i$ from
	a single edge via a series of duplicating edges and subdividing edges.  Combining
	this with the tree of 2-connected components shows how to obtain $G$.  Since
	trees are trivially 3-coupled-colorable (choose a color for the unique face, then
	find a 2-coloring of the vertices), and each of our expansion steps takes constant
	time, we can find the coloring of $G$ in linear time.
\end{proof}


\section{Subgraphs of Wheels}
\label{sec:subgraphs}
\label{sec:subgraph}

We now turn to graphs that are subgraphs of wheels.  
As demonstrated in Observation \ref{sub:observation},
non-trivial work is required to demonstrate that any subgraph
of a wheel graph is also 5-coupled-choosable. The result comes quickly from the
results proved in the previous section.

\begin{theorem}
    \label{main_theorem}
    Let $G$ be a subgraph of a wheel graph $W_n$, $n \geq 4$.
    Then $G$ is 5-coupled-choosable and the coloring can be
	found in linear time.
\end{theorem}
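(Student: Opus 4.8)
The plan is to dichotomize on whether $G$ contains a $K_4$ minor, equivalently on whether $G$ is series-parallel (of treewidth at most $2$). If $G$ has no $K_4$ minor, then $G$ is series-parallel, and series-parallel graphs are $5$-coupled-choosable \cite{coupled-choosability-plane-graphs}, so we are done immediately. If $G$ does contain a $K_4$ minor, I would show that $G$ must in fact be a subdivision of a smaller wheel $W_m$ with $4 \le m \le n$; then Lemma~\ref{main_lemma} gives $\chi^L_{vf}(W_m) \le 5$, and Lemma~\ref{subdivision_lemma} upgrades this to $G$ being $\max\{5,\chi^L_{vf}(W_m)\} = 5$-coupled-choosable.

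The structural heart is to understand the rim. Write $x_0$ for the center and let $R := G \cap C_{n-1}$ be the part of $G$ lying on the outer cycle. Since $R$ is a subgraph of the single cycle $C_{n-1}$, it is either all of $C_{n-1}$ or, if even one rim edge is absent, a disjoint union of paths, hence a forest. First I would dispose of the case $R \neq C_{n-1}$ (which also subsumes $x_0 \notin V(G)$): here $G - x_0 = R$ is a forest, so a tree-decomposition of $R$ with $x_0$ added to every bag witnesses treewidth at most $2$; thus $G$ has no $K_4$ minor and falls into the series-parallel case above.

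So suppose $R = C_{n-1}$ is the full rim cycle and set $k := \deg_G(x_0)$. If $k \le 2$ then $G$ is the cycle together with a vertex of degree at most $2$ — a pendant vertex or a theta graph — which is again series-parallel, so no $K_4$ minor arises. If $k \ge 3$, the $k$ spoke-endpoints partition the rim into $k$ arcs, every non-spoke rim vertex has degree $2$, and suppressing these degree-$2$ vertices recovers exactly the wheel $W_{k+1}$ (center $x_0$ together with the $k$ spoke-endpoints as its cycle). Equivalently, $G$ is obtained from $W_{k+1}$ by subdividing rim edges in its inherited plane embedding, so $G$ is a subdivision of $W_{k+1}$ and Lemma~\ref{subdivision_lemma} applies. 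Together these cases show that a $K_4$ minor forces $R = C_{n-1}$ and $k \ge 3$, i.e.\ exactly the subdivision-of-a-wheel case, completing the dichotomy.

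The main obstacle I anticipate is pinning down this structural dichotomy rigorously, rather than the coloring itself: I must verify that the $K_4$-minor-free cases really are series-parallel (the $k=2$ theta graph being the case to check) and that suppressing the degree-$2$ rim vertices when $k \ge 3$ yields a genuine wheel whose plane embedding and face structure match those of $G$, so that Lemma~\ref{subdivision_lemma} is legitimately applicable. Once the dichotomy ``series-parallel, or subdivision of $W_{k+1}$'' is secured, the coloring is entirely outsourced to the cited series-parallel bound, Lemma~\ref{main_lemma}, and Lemma~\ref{subdivision_lemma}.
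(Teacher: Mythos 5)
Your proposal is correct, and its core coincides with the paper's: in the only substantive case (full rim present and at least three spokes remaining), both arguments observe that $G$ is a plane subdivision of a smaller wheel $W_{k+1}$ and finish by combining Lemma~\ref{main_lemma} with Lemma~\ref{subdivision_lemma}. The difference lies in how the degenerate cases are organized and discharged. The paper sorts subgraphs by \emph{what was deleted}: removing an outer edge or vertex makes the inherited embedding outerplanar (so Wang--Lih's outerplanar bound applies), removing the center leaves a cycle that is handled directly, and the at-most-two-spokes cases are dispatched as ``outerplanar after re-choosing the outer face'' and ``subdivision of a triple edge.'' You instead dichotomize on the presence of a $K_4$ minor and send every minor-free configuration to Wang--Lih's series-parallel (partial 2-tree) bound; this collapses the paper's Cases 2, 3, and the first two subcases of Case 4 into a single citation, which is arguably cleaner, at the price of invoking the stronger series-parallel result (for the inherited plane embedding) where the paper only needs the outerplanar one. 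One small slip to tidy: the situation $x_0 \notin V(G)$ with the full rim present gives $G = C_{n-1}$, which is \emph{not} subsumed by your case $R \neq C_{n-1}$ as claimed; it is of course $K_4$-minor-free and so lands in the correct branch, but the case bookkeeping should say so explicitly. With that repaired, your structural dichotomy is complete and each branch is legitimately closed by results already cited in the paper.
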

\begin{proof}
    We examine several possibilities of the structure of $G$.

    \medskip\noindent\textbf{Case 1:} $G=W_n$. Then by Lemma \ref{main_lemma}
    $G$ is 5-coupled-choosable, and the coloring can be found in linear time.

    \medskip\noindent\textbf{Case 2:} $G$ is the result of deleting at least one edge or
    vertex of $W_n$ that is on the outer face.
	Then $G$ is outerplanar and therefore $K_4$-minor free, and so
    by Theorem~\ref{k4:new:proof}, $G$ is 5-coupled-choosable and the coloring
	can be found in linear time.

    \medskip\noindent\textbf{Case 3:} $G$ is the result of removing the center vertex
    of $W_n$. Then $G=C_{n-1}$ is outerplanar and (as in the previous case) 5-coupled
	choosable.
    
    \medskip\noindent\textbf{Case 4:} None of the above. Then all vertices of $W_n$
    belong to $G$, but we deleted some edges which were not on the outer face.
    So $G$ is the result of deleting some of the spoke-edges incident
    to the center vertex. 
    If at most two spokes remain, then $G$ has at most 3 faces and therefore
    is $K_4$-minor free, and hence is 5-coupled-choosable by Theorem~\ref{k4:new:proof}.
	If at least three spokes remain, then
    $G$ is a subdivision of some $W_k$
    for $k \geq 4$. By Lemmas~\ref{main_lemma} and \ref{subdivision_lemma}
    $G$ is 5-coupled-choosable, and we can find the coloring in linear time
	since we can detect all subdivision-vertices by scanning for vertices
	in linear time.
\end{proof}


Having established an upper bound on the list coupled chromatic number of
wheel graphs in Lemma \ref{main_lemma}, one might wonder whether this bound
is tight or not.
In \cite{coupled-choosability-plane-graphs}, it is shown that the graph
$K_4=W_4$ is 4-coupled-choosable. In fact, this is the only wheel
graph that is 4-coupled-choosable. For all other wheel graphs, the bound
of 5-coupled-choosability is tight.

\begin{theorem}
    \label{characterization_theorem}
    $\chi^L_{vf}(W_n) = 5$, for $n \geq 5$.
\end{theorem}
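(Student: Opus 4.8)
\noindent The upper bound $\chi^L_{vf}(W_n)\le 5$ is exactly Lemma~\ref{main_lemma}, so the plan is to prove the matching lower bound $\chi^L_{vf}(W_n)\ge 5$ by exhibiting, for each $n\ge 5$, a list assignment $L$ with $|L(y)|=4$ everywhere that admits no coupled coloring. As in the proof of Lemma~\ref{main_lemma}, I would first observe that any coupled coloring must assign some colors $\alpha:=c(x_0)$ and $\beta:=c(f_0)$, and that---since $x_0$ and $f_0$ are adjacent or incident to every other vertex and face---this forbids $\alpha$ and $\beta$ on every vertex of the ring graph $X_n$, which is the square $C_{2n-2}^2$ of a cycle. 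Thus a bad assignment is one for which, for \emph{every} legal choice of $\alpha,\beta$, the ring $X_n$ cannot be list-colored from the reduced lists $L(y)\setminus\{\alpha,\beta\}$. The structural feature I would exploit is that any three consecutive elements of the cyclic order $\sigma=\langle f_1,x_1,\ldots,f_{n-1},x_{n-1}\rangle$ form a triangle in $X_n$, so a triple whose three reduced lists coincide in a common $2$-element set can never be properly colored.

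For $n\not\equiv 1\pmod 3$ I expect the uniform assignment $L(y)=\{1,2,3,4\}$ to already work, and in fact to defeat ordinary $4$-colorings: if $\alpha\neq\beta$ then every ring vertex is left with the same two colors, killing any triangle; and if $\alpha=\beta$ then the ring must be properly $3$-colored, which is impossible because $\chi(C_{2n-2}^2)=3$ holds only when $3\mid 2(n-1)$, i.e.\ when $n\equiv 1\pmod 3$. Hence for these $n$ the graph $X(W_n)$ is not even $4$-colorable.

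The main obstacle is precisely the remaining case $n\equiv 1\pmod 3$, where $X(W_n)$ \emph{is} $4$-colorable (color $x_0,f_0$ alike and $3$-color the ring), so a genuine choosability construction is required. Here the plan is to force $\alpha\neq\beta$ by giving $x_0$ and $f_0$ the disjoint palettes $L(x_0)=\{1,2,3,4\}$ and $L(f_0)=\{5,6,7,8\}$, and then to plant four pairwise vertex-disjoint triangles (consecutive triples of $\sigma$, which fit since $n\equiv 1 \pmod 3$ and $n\ge 5$ force $2(n-1)\ge 12$) carrying the four lists $\{1,2,5,6\}$, $\{1,2,7,8\}$, $\{3,4,5,6\}$, $\{3,4,7,8\}$, with all remaining vertices given, say, $\{1,2,3,4\}$. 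Since $\alpha$ lies in exactly one of $\{1,2\},\{3,4\}$ and $\beta$ in exactly one of $\{5,6\},\{7,8\}$, one of the four planted triangles has both $\alpha$ and $\beta$ in its list; removing them collapses that triangle's three lists to a common $2$-set, so the ring is uncolorable for every admissible $(\alpha,\beta)$. Combined with the previous case and Lemma~\ref{main_lemma}, this would give $\chi^L_{vf}(W_n)=5$ for all $n\ge 5$. The step I expect to require the most care is verifying that the disjoint-palette trick leaves no escape---i.e.\ that the four $2\times 2$ ``color rectangles'' really do cover all sixteen choices of $(\alpha,\beta)$---and confirming that the case $n\equiv 1\pmod 3$ always supplies the room needed for four vertex-disjoint triples.
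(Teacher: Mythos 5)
Your proposal is correct and takes essentially the same approach as the paper: both lower-bound arguments use the uniform assignment $L(y)=\{1,2,3,4\}$ (forcing $c(x_0)=c(f_0)$ and then an impossible $3$-coloring of the ring) and, where that fails, the disjoint-palette construction $L(x_0)=\{1,2,3,4\}$, $L(f_0)=\{5,6,7,8\}$ with four planted triangles carrying $\{1,2,5,6\},\{1,2,7,8\},\{3,4,5,6\},\{3,4,7,8\}$. The only difference is bookkeeping: you split the cases by $n\bmod 3$ (which the paper itself notes is the true criterion for the uniform list to work) while the paper splits by $n\in\{5,6\}$ versus $n\geq 7$; both splits cover all $n\geq 5$.
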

\begin{proof}
    From Lemma \ref{main_lemma}, we know that all wheel graphs are
    5-coupled-choosable. It remains to show that they are not
    4-coupled-choosable for $n \geq 5$.

    For $n=5,6$, we consider the list assignment $L$ such that
    $L(y) = \{1,2,3,4\}$ for every $y \in V(W_n) \cup F(W_n)$.
	(So these graphs are not even 4-coupled-colorable.)
    Assume for contradiction that we have an $L$-coupled-coloring $c$ of $W_n$.
    If $c(x_0) \neq c(f_0)$, then this leaves two colors for coloring the triangle
    $x_1,f_1,f_2$ in $X_n$, impossible. Hence $c(x_0)=c(f_0)$, say they
    are both colored 4.
    Then we have an $L'$-coloring of $X_n$ with lists $L'(y) := L(y) \setminus \{4\} = \{1,2,3\}.$ 

    
    Observe that for $X_5$ and $X_6$, any putative $L'$-coloring would be unique
    up to renaming the colors, since once we have colored one triangle, every
    other vertex can be reached via a sequence of triangles.
    One verifies that for these graphs (and indeed every $X_k$ where $k-1$
    is not divisible by 3), attempting such a 3-coloring leads to a contradiction
    (see Figure~\ref{x5x6}).
    This proves Theorem~\ref{characterization_theorem} for $n=5,6$.

    For $n \geq 7$, we construct a list assignment $L$
    such that $W_n$ is not $L$-coupled-choosable. Set 
    $L(x_0)=\{1,2,3,4\}$ and $L(f_0)=\{5,6,7,8\}$.
    We further define:
    \begin{align*}
        L(f_1) = L(x_1) = L(f_2) & = \{1,2,5,6\} \\
        L(x_2) = L(f_3) = L(x_3) & = \{1,2,7,8\} \\
        L(f_4) = L(x_4) = L(f_5) & = \{3,4,5,6\} \\
        L(x_5) = L(f_6) = L(x_6) & = \{3,4,7,8\} \\
    \end{align*}
    Observe that each of these triples forms a triangle in $X_n$,
    and for any $a \in \{1,2,3,4\}$ and $b \in \{5,6,7,8\}$,
    one of these triangles has colors $\{a,b,x,y\}$ for some colors $x,y$.
    Assume for contradiction that we have an $L$-coupled-coloring $c$ of $W_n$.
    Up to symmetry, assume $c(x_0)=1$ and $c(f_0)=5$. But then
    $f_1$, $x_1$, and $f_2$ have two colors left, and therefore cannot be colored,
    a contradiction.
\end{proof}

    \begin{figure}[ht]
        \hspace*{\fill}
        \includegraphics[scale=0.45,page=9]{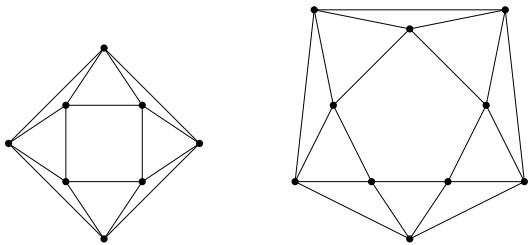}
        \hspace*{\fill}
        \caption{The graphs $X_5$ (left) and $X_6$ (right).}
        \label{x5x6}
    \end{figure}

With this, we can characterize the coupled choosability
of wheel graphs.

\begin{corollary}
    For a wheel graph $W_n$, we have $\chi^L_{vf}(W_n)=\min\{5,n\}$.
\end{corollary}

\section{Towards Partial 3-trees}
\label{sec:tw3}

Our investigation of wheel graphs was motivated by wanting to determine
the coupled choosability number of planar partial 3-trees.  To define 
these, we first define {\em Apollonian networks} recusively as follows.  
A triangle is an Apollonian network.  If $G$ is an Apollonian network,
and $f$ is a face of $G$ (necessarily a triangle) that is not the outer-face,
then the graph obtained by stellating face $f$ is also an Apollonian network.
Here {\em stellating} means the operation of inserting
a new vertex $v$ inside face $f$ and making it adjacent to all vertices
of $f$. A {\em planar partial 3-tree} is a graph
that is a subgraph of an Apollonian network (see Figure~\ref{graph:10}).
(This definition is different, but equivalent, to the ``standard''
definition of partial 3-trees via treewidth or via chordal supergraphs with
clique-size 4 \cite{BR13}.) We offer the following conjecture:

\begin{conjecture}
    \label{conj:partial3tree}
    Every planar partial 3-tree is 6-coupled-choosable.
\end{conjecture}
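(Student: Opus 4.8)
The plan is to prove the conjecture by induction on $|V(G)|$, exploiting the simplicial structure of the Apollonian network $A$ in which $G$ is embedded. Every Apollonian network has a \emph{last-stellated vertex} $v$: a degree-three vertex whose neighbours $a,b,c$ form a triangle and whose deletion yields the Apollonian network $A-v$ from the previous stellation step. Peeling off such vertices until we reach one lying in $G$ (replacing $A$ by $A-v$ whenever $v\notin V(G)$, which leaves $G$ unchanged), we may assume $v\in V(G)$ and set $G':=G-v$; then $G'\subseteq A-v$ is again a planar partial $3$-tree on one fewer vertex, so by induction it is $6$-coupled-choosable. The base case is a triangle, which has only five vertices and faces and is trivially $6$-coupled-choosable. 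The goal of the inductive step is to extend a coupled colouring of $G'$ to one of $G$ by colouring $v$ and repairing any faces of $G$ that differ from those of $G'$. As a sanity check on the densest endpoint, the full Apollonian network is maximal planar and hence already $6$-coupled-choosable by \textcite{coupled-choosability-plane-graphs}; the real difficulty lies entirely in the sparse subgraphs, where deleting spokes and rim edges merges triangular faces into large faces.

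The inductive step splits on $\deg_G(v)\in\{1,2,3\}$ (since $G$ is connected, $v$ is never isolated). If $\deg_G(v)=1$ then $v$ is a pendant inside a single face $F$ of $G'$, so $G$ and $G'$ have identical face sets, and $v$ is constrained only by its one neighbour and by $F$; with $|L(v)|=6$ at least four colours survive, so $v$ is colourable and $G'=G-v$ stays connected. If $\deg_G(v)=2$ there are two sub-cases: when the path $a\text{-}v\text{-}b$ borders a single face of $G'$ (the tree-like situation) no face is created and the pendant argument applies, but when $v$ is incident to two distinct faces of $G$ these merge into one face $F'$ of $G'$, so adding $v$ back splits $F'$ into adjacent faces $F_1,F_2$. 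The single colour inherited by $F'$ then no longer suffices, and I would recolour the triple $\{v,F_1,F_2\}$ simultaneously, using that $F_1$ and $F_2$ acquire only $v$ and the two split edges as new neighbours beyond those already present for $F'$; a counting argument in the spirit of the $3$-vertex/$4$-vertex bookkeeping of Lemma~\ref{main_lemma} should guarantee enough surviving colours.

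The hard part will be $\deg_G(v)=3$, where $v$ together with whichever of $ab,bc,ca$ survive can split a face of $G'$ into as many as three mutually adjacent faces of $G$, all of which must be recoloured consistently with one another, with $v$, and with the boundary inherited from the merged face. This is exactly the mechanism of the Observation, where deleting an edge of $K_4$ fuses two triangles into a larger face and forces a $K_5$ in $X(G)$; in the partial-$3$-tree setting these merged faces can be arbitrarily large and incident to many elements, so I expect the worst configurations to consume the full budget of six colours and to demand either a careful case analysis on the cyclic boundary of the split face or a discharging argument bounding how many already-coloured neighbours such a face can have. A parallel strategy I would pursue is to induct on edge \emph{deletions} and repair colourings through the subdivision/bigon machinery of Lemma~\ref{subdivision_lemma}; the obstruction there is that edge deletion does not preserve coupled-choosability (again by the Observation), so one must carry a stronger invariant through the induction---for instance a uniform lower bound on the slack $|L'(\cdot)|-\deg(\cdot)$ across the merging region---that is strong enough to survive a face merge yet still provable from the inductive hypothesis. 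Reconciling these two views of the same face-merging phenomenon is, I believe, the crux on which the conjecture turns.
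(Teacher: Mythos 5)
The statement you are trying to prove is Conjecture~\ref{conj:partial3tree}; the paper offers no proof of it and explicitly states that it ``remains open,'' so there is nothing to compare your argument against except its own internal soundness. On that score, your proposal is a plan rather than a proof, and the plan has a concrete hole exactly where you admit the difficulty lies. When you delete the simplicial vertex $v$ and pass to $G'=G-v$, the (up to three) faces of $G$ around $v$ merge into a single face $F'$ of $G'$. Re-inserting $v$ splits $F'$ into faces $F_1,F_2,F_3$ that are mutually adjacent, adjacent to $v$, and---this is the fatal point---each still incident to some portion of the boundary of $F'$ and adjacent to the faces across those boundary edges. In a planar partial 3-tree that boundary can be arbitrarily long, so each $F_i$ can have unboundedly many already-coloured neighbours in $X(G)$; no fixed list size makes the ``recolour the triple $\{v,F_1,F_2\}$'' step go through by counting alone. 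Your proposed repairs (``a counting argument in the spirit of Lemma~\ref{main_lemma} should guarantee enough surviving colours,'' ``a uniform lower bound on the slack'') are stated as hopes, not established invariants, and formulating an invariant that survives a face merge is precisely the open problem: the paper's own Observation shows that coupled choosability is not monotone under edge deletion for exactly this reason, and Theorem~\ref{counterexample} exhibits a planar partial 3-tree (a Halin graph) that is not even 5-coupled-colorable, so any inductive invariant for the bound 6 has essentially no slack to give away.

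There are also smaller unaddressed issues: $G-v$ need not be connected (if $v$ is a cut vertex of $G$, e.g.\ when $v$ retains two neighbours $a,b$ with $ab\notin E(G)$ and no other $a$--$b$ path), so the inductive hypothesis as you state it does not apply; and in the degree-2 case your ``tree-like'' sub-case and ``two distinct faces'' sub-case do not exhaust the possibilities once one tracks which of $ab,bc,ca$ survive in $G$. None of this is to say the strategy is hopeless---peeling simplicial vertices is the natural attack on partial 3-trees---but as written the inductive step is not carried out in any of the cases where the face set of $G$ differs from that of $G'$, which are the only cases that matter. The conjecture remains open, and your write-up correctly identifies, but does not close, the gap.
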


\begin{figure}[ht]
    \hspace*{\fill}
    \includegraphics[scale=0.6,page=10]{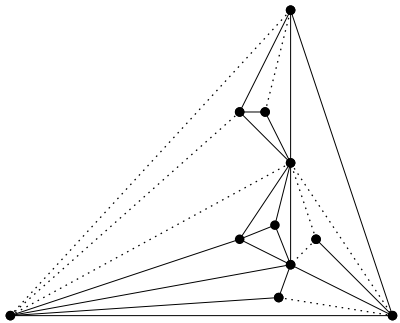}
    \hspace*{\fill}
    \caption{A planar partial 3-tree. Dotted edges show the Apollonian network.}
    \label{graph:10}
\end{figure}

Note that the conjecture holds for Apollonian networks, since these are maximal
planar graphs and these are known to be 6-coupled-choosable 
\cite{coupled-choosability-plane-graphs}.  But this does not
imply 6-coupled-choosability of subgraphs, and so the conjecture remains
open.

Towards the conjecture, we studied several graph classes that are
planar partial 3-trees (and generalize wheels).  
One such class of graphs are the \emph{Halin graphs}, which are defined by starting
with a tree $T$ and adding a cycle between the leaves of $T$. 
See also the solid edges in Figure~\ref{halin}.
Wheel graphs
are the special case of Halin graphs where $T$ is a star graph.
A second class of planar partial 3-trees are the \emph{stellated outer-planar graphs},
obtained by  starting with some outerplanar graph $G$, and stellating
the outer-face.  See also the dashed edges in Figure~\ref{halin}.
Wheel graphs are the special case of stellated outerplanar graphs where
the outerplanar graph is a cycle.
%
%
%

One can easily see that Halin graphs
are exactly the duals of stellated outerplanar graphs.
Therefore, any list coupled coloring of
a stellated outerplanar graph corresponds to a list coupled coloring of a Halin graph.
Unfortunately, our upper bound for the coupled choosability of wheel graphs does
not in general extend to Halin graphs.

\begin{theorem}
    \label{counterexample}
    There exists a stellated outerplanar graph
    (equivalently a Halin graph) that is not 5-coupled-colorable
	(in particular therefore it is not 5-coupled-choosable).
\end{theorem}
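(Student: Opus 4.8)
The plan is to reduce the statement to a purely graph-coloring one and then exhibit an explicit witness. Since we ask for a graph that is not $5$-coupled-\emph{colorable}, we may take all lists equal to $\{1,\dots,5\}$; in the language of the auxiliary graph $X(G)$ this is exactly the assertion that $\chi(X(G)) \ge 6$. Both the class of Halin graphs and the class of stellated outerplanar graphs are convenient here: by the duality lemma above, a graph in one class is the plane dual of a graph in the other, and since $\chi^L_{vf}(G)=\chi^L_{vf}(G^*)$ (and the same exchange of vertices and faces applies to ordinary coupled colorings), it suffices to produce a witness in whichever representation is easier to analyze. Recall also that every Halin graph is $3$-connected, so its $X(G)$ is an optimal $1$-planar graph; as $1$-planar graphs can require $6$ colors, a witness should exist, and the real work is to realize one inside this restricted class.

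First I would set up a concrete candidate. Working on the stellated outerplanar side, I would start from an outerplanar graph $P$, triangulate its interior, and then stellate the outer face by an apex $w$ adjacent to every boundary vertex of $P$; equivalently, on the Halin side, I would choose a tree $T$ with several internal vertices (a wheel is the degenerate case where $T$ is a star, which we already know is safe, so the extra internal vertices are essential). The guiding intuition is local density: for every edge $uv$ of $G$ the set $\{u,v,f_1,f_2\}$, where $f_1,f_2$ are the two faces on $uv$, is already a $K_4$ in $X(G)$, and a triangulated region produces many such $K_4$'s sharing elements. By layering these around the apex I would try to force either a $K_6$ subgraph of $X(G)$ or, failing that, a tightly constrained cyclic configuration. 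In the latter case I would imitate the lower-bound argument for wheels in Theorem~\ref{characterization_theorem}: color a dominating cluster first, observe that this leaves short lists on a cycle-like remainder, and show that the induced problem has no solution (for wheels this was the failure of a $3$-coloring of $X_k$ unless $k-1$ is divisible by $3$; here the several internal tree vertices supply extra local constraints not available in a single wheel).

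The main obstacle is the non-existence half: with identical lists one must rule out \emph{every} $5$-coloring of $X(G)$, which is a global statement. If $X(G)$ contained a $K_6$ the bound $\chi(X(G))\ge 6$ would be immediate; however $X(G)$ is $1$-planar, so its clique number is at most $6$ (already $K_7$ is not $1$-planar), and it is not clear that this extremal value is attained as a subgraph when $G$ is Halin. I therefore expect to need a genuinely global argument rather than a single clique. Concretely I would fix the colors on a maximal clique of $X(G)$ --- taking an edge-plus-two-faces $K_4$ and extending it as far as the drawing allows --- propagate the forced colors through the overlapping triangles, and drive the configuration to a contradiction, checking the finitely many branches by hand or by computer. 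A secondary difficulty, to be watched throughout, is that not every $3$-connected plane graph arises as a Halin graph, so the $6$-chromatic witness cannot be chosen arbitrarily among optimal $1$-planar graphs: it must be the $X$-image of a graph that genuinely lies in the Halin (equivalently stellated outerplanar) class.
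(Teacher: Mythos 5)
Your overall strategy is the right one and matches the paper's in shape: fix constant lists so that the claim becomes $\chi(X(G))\ge 6$ for some Halin graph $G$, observe that $K_6$ need not appear as a subgraph of $X(G)$, and instead derive a contradiction by seeding colors on a small clique and propagating forced constraints through the many overlapping copies of $K_4$ coming from edge--face incidences. However, as written this is a plan, not a proof: an existence theorem needs an explicit witness, and you never commit to one. You describe a family of candidates (``triangulate the interior, stellate the outer face,'' ``a tree with several internal vertices'') and promise to check ``the finitely many branches by hand or by computer,'' but you do not verify that any particular graph actually fails to be $5$-colorable. Without that verification the argument establishes nothing, since most small Halin graphs (e.g., all wheels, by Lemma~\ref{main_lemma}) \emph{are} $5$-coupled-colorable, so the propagation could very well terminate without contradiction on the graph you happen to pick.

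The paper's witness is much smaller than what you were contemplating: the triangular prism, the Halin graph of the tree with just two internal vertices, so that $X(G)$ has only $6+5=11$ vertices. The three quadrilateral faces pairwise share an edge and hence form a triangle in $X(G)$; they receive colors $1,2,3$, the two endpoints of the edge separating the faces colored $2$ and $3$ must use colors from $\{1,4,5\}$ with at most one of them equal to $1$, and from there the $K_4$'s force a triangle down to two available colors. If you want to salvage your write-up, the missing step is exactly this: name the prism (or another concrete Halin graph), and carry the propagation to its contradiction explicitly.
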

\begin{proof}
	The Halin-graph $G$ is the triangular prism, see Figure~\ref{halin}
	where we also show the dual graph $G^*$ and the 
	1-planar graph $X(G)$.  The claim holds if we show that there
	is no 5-coloring of the vertices of $X(G)$.

	Assume for contradiction that $X(G)$ had a 5-coloring;  up to symmetry
	we may assume that the triangle formed by the three degree-4-faces
	of $G$ is colored $1,2,3$.  Let $(t,t')$ be the edge that crosses
	the edge colored with $2$ and $3$.  Vertices $t,t'$ are colored 
	with 1, 4 or 5; up to renaming of colors 4 and 5 hence one
	of them is colored 4.

	Starting with this coloring, propagate restrictions on the possible
	colors to other vertices of $X(G)$ along the numerous copies of $K_4$
	(note that all vertices other than $t,t'$ are adjacent to the one
	colored 1).
	This leads to a triangle that has only two possible colors left, a contradiction.
\end{proof}

\begin{figure}[ht]
    \centering
    \includegraphics[scale=0.6,page=11]{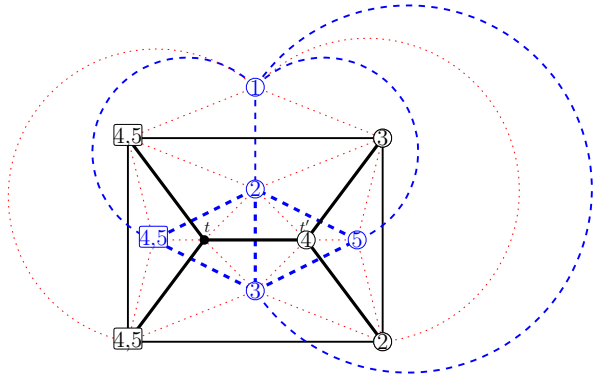}
    \caption{A Halin-graph  $G$ (black solid; the tree is bold),
        and the dual graph $G^*$ (blue dashed) which is
        a stellated outerplanar graph (the outerplanar graph is bold).
	Taking both, and adding the face-vertex incidences
         (red dotted) gives graph $X(G)$.  We also show the only possible 5-coloring
	(up to symmetry) of $X(G)$, which leads
	to a contradiction since a triangle must be colored with
	2 colors.
    }
    \label{halin}
\end{figure}

In particular, this shows that we cannot replace `6' by `5' in
Conjecture~\ref{conj:partial3tree}. We also remark that, in line with
Observation \ref{sub:observation}, a supergraph of the triangular
prism {\em is} 5-coupled-colorable.  Namely, one can insert diagonals
in the degree-4 faces and obtain the octahedron.  An octahedron is
3-colorable because all faces are triangles and the vertex-degrees are
even.  The dual graph (which is the cube) is bipartite and hence 
2-colorable.  Therefore, using disjoint sets of colors for the primal
and the dual graph, we get a 5-coupled-coloring of the octahedron.

Returning to wheel graphs, Theorem~\ref{counterexample} shows that
wheels are strictly better (as far as coupled choosability is concerned)
than Halin-graphs.
Now we study a second graph class that lies between the wheels and the
planar partial 3-trees.
These are the \emph{IO-graphs}, which are the planar graphs that can be obtained
by adding an independent set to the interior faces of an outerplanar graph
(see Figure~\ref{iograph}). Certainly any subgraph of a wheel is an IO graph. 

\begin{conjecture}
    \label{conj:IO}
    Every IO-graph is 5-coupled choosable.
\end{conjecture}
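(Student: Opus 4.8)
The plan is to imitate the two–phase strategy of Lemma~\ref{main_lemma}: first color a set of ``hubs'', and then argue that the residual instance can be completed. Write $H$ for the outerplanar graph underlying $G$ and $I$ for the added independent set, so that $V(G)=V(H)\cup I$ and each $v\in I$ lies inside an interior face of $H$ whose boundary it dominates; I would first reduce to the case that $H$ is $2$-connected, treating cut vertices block by block. The hubs I would color first are the outer face $f_0$ together with all of $I$. The key elementary observation is that in the conflict graph $X(G)$ the vertices of $I$ are pairwise non-adjacent (they are independent in $G$ and no face of $G$ is incident to two of them) and no $v\in I$ is incident to $f_0$ (every $v$ lies strictly inside an \emph{interior} face of $H$). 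Hence, once $f_0$ has been colored, each hub $v\in I$ may be colored from a list of full size $5$, independently of the other hubs, and the only purpose of choosing these colors carefully is to protect the lists of the remaining elements.

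After deleting $f_0$ and all of $I$ from $X(G)$, the remaining graph $X''$ consists of the $H$-vertices and the interior faces of $G$, where each stellated $k$-gon has been ``blown up'' into a cycle of triangular face-vertices $g_1,\dots,g_k$ with $g_i$ incident to the boundary vertices $u_i,u_{i+1}$ and adjacent to the face across the edge $u_iu_{i+1}$. Every small face $g_i$ touches only one hub (its own center) and at most one copy of $f_0$, so it automatically retains at least $3$ colors. I would therefore aim to choose the hub colors, via a counting argument generalizing Case~2 of Lemma~\ref{main_lemma}, so that every $H$-vertex also retains at least $3$ colors, and then finish by peeling $X''$ into outerplanar pieces and invoking the fact \cite{outer} that outerplanar graphs are $3$-choosable even with two consecutive outer-face vertices precolored, exactly as in the wheel proof (where $X''$ is precisely $X_n$ and the peeling is the removal of four consecutive vertices). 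Thus the single-hub wheel of Lemma~\ref{main_lemma} is exactly the template, and the task is to run it simultaneously over all hubs glued along the weak-dual tree $T$ of $H$.

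The hard part, and the reason this remains a conjecture, is the global control of list-erosion. A single $H$-vertex $u$ can lie on the boundary of many stellated faces, hence be adjacent in $G$ to many hubs, and its degree in $X(G)$ (counting $H$-neighbors, incident hubs, and incident faces) is unbounded; consequently neither Lemma~\ref{erdos} nor the \emph{per-face} counting of Lemma~\ref{main_lemma} bounds the cumulative erosion at $u$. I expect that forcing each $H$-vertex to keep three colors will require a discharging or fractional/flow argument that distributes the ``blame'' for erosion across $T$, exploiting that the small faces around one hub form a cycle and that successive hubs meet only across single chords. A secondary difficulty is that $X''$ is genuinely not outerplanar and still contains the high-degree vertices coming from \emph{unstellated} large faces, so the final peeling must be organized along the leaves of $T$, handling one blown-up face at a time while keeping the precolored boundary of the previously processed pieces consistent; reconciling this leaf-by-leaf peeling with the simultaneous hub-color choice is where I anticipate the argument to be most delicate.
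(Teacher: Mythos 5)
The statement you are trying to prove is stated in the paper as a \emph{conjecture}: the authors give no proof, only the remark that one might build a coloring incrementally along the tree of wheels into which a maximal IO-graph decomposes. Your proposal is likewise not a proof, and to your credit you say so explicitly; but it is worth being precise about where it breaks, because the failure is not merely a missing technical estimate. Your plan is to color all hubs (the outer face $f_0$ and every vertex of $I$) first and then finish on the residual graph $X''$. This hubs-first order is fatal as stated: a single vertex $u$ of the outerplanar graph $H$ can lie on the boundary of arbitrarily many stellated interior faces, hence be adjacent in $X(G)$ to arbitrarily many hubs, and after all hubs are colored the list of $u$ can be reduced not just below $3$ but to the empty set. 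The pigeonhole argument of Case~2 of Lemma~\ref{main_lemma} that you hope to generalize works only because the wheel has exactly \emph{one} interior hub, so there are $25$ candidate color-pairs and each remaining element excludes at most $6$ of them; with $k$ hubs the choices at different hubs interact through shared boundary vertices, and no averaging over a fixed product set is available. No discharging over the weak-dual tree can rescue the order ``all hubs, then the rest'': the obstruction is local to one high-degree vertex of $H$, not a global density issue.

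A secondary inaccuracy: your ``key elementary observation'' that no face of $G$ is incident to two vertices of $I$, and that no $v\in I$ is incident to $f_0$, depends on each $v\in I$ dominating the boundary of its host face and on at most one vertex of $I$ per face; the paper's definition of IO-graphs does not require the former (subgraphs of wheels with missing spokes are explicitly IO-graphs), and when $v$ has low degree inside its face the faces of $G$ incident to $v$ can be large and can coincide with faces incident to many $H$-vertices, which further feeds the erosion problem. The route the paper itself suggests (and leaves open) is structurally different from yours: decompose $G^+$ as a tree of wheels glued along edges and color the wheel-subgraphs \emph{one at a time} in tree order, using a strengthened form of Theorem~\ref{main_theorem} that tolerates a precolored shared edge and its two incident faces, so that no element ever sees more than a bounded number of previously colored hubs. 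If you want to pursue this, the lemma you actually need is a version of Lemma~\ref{main_lemma} for subgraphs of wheels in which two adjacent rim vertices, the spoke face between them, and the two faces across the glued edge may be precolored; formulating and proving that boundary-compatible version is the real open content of the conjecture.
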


We studied subgraphs of wheel graphs because they may be an important 
stepping stone towards Conjecture~\ref{conj:IO}.  In particular, consider
some IO-graph $G$.
obtained from an outerplanar graph $O$ and independent set $I$.
Let $G^+$ be a maximal IO-graph containing $G$, i.e., add edges to $G$
for as long as the result is simple and an IO-graph.  Then $G^+$ is a
tree of wheels, where each wheel consists of a vertex $x\in I$ with its
neighbours, and  the wheels have been glued together at edges.
Correspondingly $G$ is a tree of subgraphs of wheels.
It may be possible to use Theorem~\ref{main_theorem}
(enhanced with further restrictions on the coloring of some parts)
to prove Conjecture~\ref{conj:IO} by building a coloring
of $G$ incrementally in this tree, 
but this remains future work.
\begin{figure}[ht]
    \centering
    \includegraphics[scale=0.55,page=6]{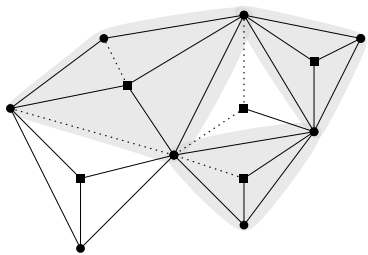}
    \caption{An IO graph $G$ consists of an outerplanar graph (circles)
    and an independent set (squares).  Dotted edges at added to obtain
	$G^+$, and some of the wheels used to build $G^+$ are shaded.}
    \label{iograph}
\end{figure}

We end with some other open questions surrounding list-colorability and
list-coupled-colorability.  Foremost, is every 1-planar graph 7-list-colorable?
Borodin states this to be true \cite{Borodin13}, but quotes the paper by
Wang and Lih \cite{coupled-choosability-plane-graphs} which only deals with
7-coupled-choosability. Hence all optimal 1-planar graphs are 7-list-colorable
but to our knowledge 
the problem remains open for 1-planar graphs that are not subgraphs of
optimal 1-planar graphs (e.g.~any 1-planar graph that contains $K_6$ as
a subgraph).
Second, how easy is it to test whether a planar graph
is $k$-coupled-choosable?  It is known that testing choosability is
$W[1]$-hard with respect to treewidth, but linear-time solvable for constant
treewidth \cite{FFL+11}.  Do these results transfer to coupled-choosability?

\bibliographystyle{splncs04}
\bibliography{lifetime}

\end{document}